\let\epsilon\varepsilon
\let\phi\varphi
\def\N{\mathbb N}
\def\R{\mathbb R}
\def\S{\mathcal S}
\def\argmax{\operatorname{argmax}}
\def\as{\text{a.s.}}
\newtheorem{theorem}{Theorem}
\newtheorem{lemma}{Lemma}
\newtheorem{remark}{Remark}
\newtheorem{definition}{Definition}
\begin{document}

\title{Nonparametric Statistical Inference for Ergodic
Processes}
\author{Daniil Ryabko${}^*$, Boris Ryabko${}^\#$}
\date{}
\maketitle
\centerline{${}^*$SequeL, INRIA-Lille Nord Europe, France,  daniil{@}ryabko.net}
\centerline{${}^\#$Institute of Computational Technologies of Siberian Branch of Russian Academy of Science,}
\centerline{Siberian State University of Telecommunications and Informatics,}
\centerline{Novosibirsk, Russia;   boris{@}ryabko.net }

\begin{abstract}
In this work a  method for statistical analysis of time series is proposed,   which is used to obtain 
solutions to some classical problems of mathematical statistics under the only assumption
that the process generating the data is stationary ergodic.
Namely, three problems are considered:   goodness-of-fit (or identity) testing, process classification, and the change point problem. For each
of the problems a test is constructed that is asymptotically
accurate for the case when the data is generated by stationary ergodic  processes. 
The tests are based on empirical estimates of distributional distance.
\end{abstract}
%\vskip6mm

\section{Introduction}%\footnote{}
{\bf Overview.}
In this work we consider the problem of statistical analysis of time series, 
when nothing is known about the underlying process generating the data, except
that it is stationary ergodic. There is a vast literature on time series
analysis under various parametric assumptions, and also under such  non-parametric 
assumptions as that the process has a  finite memory or possesses certain mixing rates.
While under these settings most of the problems of statistical analysis 
are clearly solvable and efficient algorithms exist, in the general setting
of stationary ergodic processes it is far less clear what can be done in principle,
which problems of statistical analysis admit a solution and which do not.
In this work we propose a method of statistical analysis of time series,
that allows us to demonstrate that some classical statistical problems
indeed admit a solution under the only assumption that the data is stationary 
ergodic, whereas before solutions only for more restricted cases were known.
The solutions are always constructive, that is, we present asymptotically accurate algorithms 
for each of the considered problems. All the algorithms are based on empirical 
estimates of distributional distance, which is in the core of the suggested approach. We suggest that the proposed approach 
can be applied to other problems of statistical analysis of time series, with the view
of establishing principled positive results, leaving the task of finding
optimal algorithms for each particular problem as a topic for further research.

Here we concentrate on the following three problems:  goodness-of-fit (or identity) testing, process
classification, and  the change point problem. 
%For all three
%problems we obtain asymptotically accurate tests  not making any
%other assumptions about the distributions generating the data
%rather than they are stationary ergodic. The solutions are 
%based on empirical estimates of the distributional
%distance (a weighted sum of frequencies of all $k$-tuples for all natural $k$), which lies in the core of the suggested approach.

\noindent{\bf Goodness-of-fit testing.}
The first problem is the following  problem of hypothesis testing. A stationary ergodic process distribution $\rho$
is known theoretically. Given a data sample, it is required to test whether it was
generated by $\rho$, versus it was generated by any other stationary ergodic distribution that is different from $\rho$ 
(goodness-of-fit, or identity testing). 
The case of i.i.d. or finite-memory processes is widely studied (see e.g.~\cite{cs}); in particular, when $\rho$ has a finite memory~\cite{r1} proposes a test against
any stationary ergodic alternative: a test that can be based on an arbitrary universal code.
It was noted in \cite{shields} that an asymptotically accurate test for the case
of stationary ergodic processes over finite alphabet exists (but no test was proposed).
%A generalization of this problem is when both $H_0$ and $H_1$ are composite (e.g. \cite{nob}).
Here we propose a concrete and simple
asymptotically accurate  goodness-of-fit  test, which demonstrates the proposed
approach: to use empirical distributional distance for hypotheses
testing.
By asymptotically accurate test we mean the following.  First, the Type I error
of the test (or its size) is fixed and is given as a parameter to the test.
That is, given any $\alpha>0$ as an input, under $H_0$ (if the data 
sample was indeed generated by $\rho$) the probability  that the test says ``$H_1$'' is not
greater than $\alpha$. Second, under any hypothesis in $H_1$ (that is, if the distribution generating 
the data is different from $\rho$), the test will say ``$H_0$'' 
not more than a finite number of times, with probability 1. 
 In other words, the Type I error  of the test is fixed and the Type II error
can be made not more than a finite number of times, as the data sample increases, with probability 1 under any stationary ergodic alternative.
%Two comments on this setting are in order. First, typically in hypothesis testing one seeks,
%for a fixed Type I error (or level) $\alpha$ 

A comment on this setting is in order. When the alternative  $H_1$ is less general, e.g. 
distributions that have finite-memory   \cite{gut} or known mixing rates, one typically seeks a test 
that has optimal rates of decrease of probability of Type II error to 0. For our case, 
when the alternative is the set of all stationary ergodic processes, this  rate 
is necessarily non-uniform. In this sense, the property that we establish for our test is the  strongest possible.
Observe  that the notion of consistency that we consider is  stronger than requiring that the test makes only a finite number of errors (either Type I or Type II) with 
probability 1, the setting considered,  for example, in the cases when $H_0$ is composite, %(e.g. \cite{nob}), 
or for the process classification problem  that we address in this work. 
%The key difference is in that 
%the identity testing problem is by nature asymmetric: $H_0$ consists of just one distribution while $H_1$ of 
%all others stationary ergodic distributions. In this case one can make stronger stat

%On the one 
%talk about (optimal) rates of 
%decrease of probability of Type II error: this convergence can be arbitrary slow.
%On the other hand, an even weaker setting is often considered in the literature (e.g. \cite{nob}): that the test only makes 
%a finite number of errors, under either $H_0$ or $H_1$; the property that we establish for our test is strictly stronger.

\noindent{\bf Process classification.} In the next problem that we consider, we again 
have to decide whether a data sample was generated by a process satisfying a hypothesis $H_0$ or
a hypothesis $H_1$. However, here $H_0$ and $H_1$ are not known theoretically, but are represented
by two additional  data samples. 
More precisely, the problem is that of process classification, which can be formulated as follows. We are given
three samples
$X=(X_1,\dots,X_k)$, $Y=(Y_1,\dots,Y_m)$ and $Z=(Z_1,\dots,Z_n)$
generated by stationary ergodic processes with
distributions $\rho_X$, $\rho_Y$ and $\rho_Z$. It is known that $\rho_X\ne \rho_Y$,
while either $\rho_Z=\rho_X$ or
$\rho_Z=\rho_Y$. It is required to test which one is the case.  That is, we have to 
decide whether the sample $Z$ was generated by the same process as the sample $X$ or by the same process 
as the sample $Y$. This problem
for the case of dependent
time series was considered for example in \cite{gut}, where a solution is presented under the finite-memory assumption. It is closely related
to  many important problems in statistics and application areas, such as pattern recognition, classification, etc.
Apparently no  asymptotically accurate procedure for process classification
has been known so far for the general case of  stationary ergodic processes.
Here we propose  a test that converges almost surely to the
correct answer. In other words, the test makes only a finite number of errors with probability 1, with respect to 
any stationary ergodic processes generating the data. Unlike in the previous problem, here we do not explicitly distinguish 
between Type I and Type II error, since the hypotheses are by nature symmetric: $H_0$ is ``$\rho_Z=\rho_X$'' and
$H_1$ is ``$\rho_Z=\rho_Y$''.
%The test answers $\rho_X$ or $\rho_Y$ according to whether $X$ or $Y$ is
%closer to $Z$ in
%empirical distributional distance.

\noindent{\bf Change point estimation.}
Finally, we consider the change point problem. It is  another
classical problem, with vast literature
on both parametric (see e.g. \cite{basnik})
and non-parametric (see e.g. \cite{brodar}) methods  for solving
it. 
%There are also many variants and generalizations of this problem.
 In this work we address the case where the data is dependent, its form and the structure of dependence is unknown, 
and marginal distributions before and after the change may be the same. We consider the following (off-line) setting of the problem: a (real-valued) sample
$Z_1,\dots,Z_n$ is given, where $Z_1,\dots,Z_k$ are generated
according to some distribution $\rho_X$ and $Z_{k+1},\dots,Z_n$
are generated according to some distribution $\rho_Y$ which is
different from $\rho_X$. It is known that the distributions
$\rho_X$ and $\rho_Y$ are stationary ergodic, but nothing else is
known about them. 
% The case when the stationary distributions
%before and after the change are  known (possibly up to a parametrization) was considered
%in many works, for example \cite{banpap,fuh,lav}. 
Most literature on change point 
problem for dependent time series assumes that the marginal distributions before and after the change point
are different, and often also make explicit restrictions on the dependence, such as requirements on mixing rates.
Nonparametric methods used in these cases are typically based on Kolmogorov-Smirnov statistic, Cramer-von Mises statistic, 
or generalizations thereof \cite{brodar, carle, gls}.
% the change point problem is
%considered for distributions satisfying  certain strong mixing
%conditions and having different single-dimensional marginals.
%Apparently the most general case is considered in the recent work
%\cite{harwyl}, namely, it is assumed that the time series before
%and after the change point are first-order stationary and have
%different  single-dimensional marginals.
The main difference of our results is that we do not assume that the single-dimensional  marginals (or finite-dimensional marginals of any given fixed size)  are different, 
and do not make any assumptions on the structure of dependence. The only assumption is that the (unknown) process distributions before and after the change 
point are stationary ergodic.
%Again,  we solve the change point problem by considering   empirical
%distributional distance %(i.e. weighted sum of empirical frequency estimates)
%which results  in an asymptotically  accurate
%change point estimate for the case of stationary ergodic
%distributions before and after the change.
Our result is a demonstration of that asymptotically accurate change point estimation is possible in this general setting. 

\noindent{\bf Related problems.} Let us briefly relate the three problems for which we present consistent tests
to other problems of statistical analysis of stationary ergodic time series. First, a closely related problem 
is that of {\em homogeneity testing}. The problem is as follows: given two samples, one has to decide
whether they were generated by the same process distribution or by different ones. While solutions to this
problem exist for i.i.d. data (see for example \cite{bg, ds}, and references therein), for stationary 
ergodic processes (and even for a smaller class of B-processes) a consistent test does not exist, even in the binary-valued case,
as was shown in \cite{disit09}.  This problem is closely related to change point {\em detection} problem: given a single sample,
one has to decide whether there was an abrupt change of distribution somewhere. If we know that there was such a change, 
then we can give an asymptotically consistent estimate for it, as we show here;  however, if it is  not known that the change point exists, nobody
 can construct a consistent change point test, because there is no consistent test for homogeneity.
In other words, we can tell where a change point is, if there is one, but we cannot (in general) tell whether
there is one or not (in the case of stationary ergodic distributions).  
Observe that the process classification problem described above turns out to be
  easier than homogeneity testing: a consistent test exists 
for the former (constructed in this work) but not for the latter.

Other hypothesis testing problems that concern stationary time series include testing
for having a certain memory (i.e. testing the hypothesis ``$k$-order Markov process'' versus 
``stationary ergodic, not $k$-order Markov''), testing for membership to parametric families, and others
\cite{kief, mor2, mor4, r2, r1}. Some recent general results that characterize 
those hypotheses about finitely-valued ergodic processes that can be tested are provided in \cite{d10}. Finally, a related problem is that of {\em prediction} or {\em forecasting} \cite{br,  gml, mor3, br2}. 

In this respect, the results of the present work  clarify which problems can and which cannot be solved, 
when the only  assumption on the data is that it is stationary ergodic.

\noindent{\bf Methodology.} All the tests that we construct are based on empirical 
estimates of the so-called distributional distance. For two processes $\rho_1, \rho_2$
a distributional distance is defined as $\sum_{k=1}^\infty w_k |\rho_1(B_k)-\rho_2(B_k)|$,
where $w_k$ are positive summable real weights, e.g. $w_k=2^{-k}$ and $B_k$ range
over a countable field that generates the sigma-algebra of the underlying probability space.
For example, if we are talking about finite-alphabet processes with the binary alphabet $A=\{0,1\}$, $B_k$ would
range over the set $A^*=\cup_{k\in\N} A^k$; that is, over all tuples $0, 1, 00, 01, 10, 11, 000, 001,\dots$ (of course, we could just as well omit, say, $1$ and $11$); therefore,
the distributional distance in this case is the weighted sum of differences of probabilities of all possible tuples.
In this work we consider real-valued processes,  so $B_k$ have to range through a suitable sequence of intervals, all pairs of such intervals, triples, etc. 
(e.g. we can use a sequence of partitions into cubes of decreasing volume, see the next section for formal definitions). 
Although distributional distance is a natural concept that, for stochastic processes, has been studied for a while \cite{gray},
its empirical estimates have not, to our knowledge,  been used for statistical analysis of time series. 
We argue that this distance is rather natural for this kind of problems, first of all, since it can 
be consistently estimated (unlike, for example, $\bar d$ distance, which cannot \cite{ow} be consistently
estimated for the general case of stationary ergodic processes). Secondly, it is always bounded, unlike  (empirical)
  KL divergence, which is often used for statistical inference for time series (e.g.  \cite{cs, r1, ac, cs2, mor1} and others).
Other approaches to statistical analysis of stationary dependent time series include the use of (universal) codes \cite{kief, r1, r2}. 
%the case of the problems that we consider is that it is unbounded.
Here we first show that distributional distance between stationary ergodic processes can be consistently estimated 
based on sampling, and then apply it to construct   consistent tests for the three problems of statistical analysis described above.

Although empirical estimates of the distributional distance involve taking an infinite sum, in practice it is obvious that 
only a finite number of summands has to be calculated. This is due to the fact that empirical estimates have to be compared
to each other or to theoretically known probabilities, and since the (bounded) summands have (exponentially) decreasing weights, 
the result of the comparison is known after  only  finitely many evaluations. Therefore, the algorithms presented can be applied 
in practice. On the other hand, the main value of the results is in the demonstration of what is possible in principle; finding practically efficient procedures for each of the considered problems is an interesting problem for further research.
A closely related but more practical approach is that of tests based on universal codes \cite{r2,br2}.
\section{Preliminaries}
We are considering (stationary ergodic) processes with the alphabet $A=\R$.
The generalization to $A=\R^d$ is straightforward; moreover, the results can be extended
to the case when $A$ is a complete separable metric space.
We use the symbol $A^*$ for $\cup_{i=1}^\infty A^i$.
Elements of $A^*$ are called words or sequences.
For each $k,l\in\N$, let $B^{k,l}$ be  a partition of the set $\R^k$ into $k$-dimensional cubes of with volume $h_l^k$, such that
$h_l^k\to0$ when $l\to\infty$, for every $k\in\N$. Moreover, define $B^k=\cup_{l\in\N}B^{k,l}$.
%the set of all cylinders of  the form $A_1\times\dots\times A_k$
%where $A_i\subset A$ are intervals with rational endpoints.
Let also $\mathcal B=\cup_{k=1}^\infty B^k$; since this set is countable we can
introduce an enumeration $\mathcal B=\{B_i : i\in\N\}$. The set $\{B_i\times A^\infty: i\in\N\}$
 generates the Borel $\sigma$-algebra on $\R^\infty=A^\infty$.
For a set $B\in \mathcal B$ let $|B|$ be the index $k$ of the set $B^k$  that
$B$ comes from: $|B|= k: B\in B^k$.

%Elements of $A^*$ will be  called words or sequences (when the latter can cause no confusion).

For a sequence $X\in A^n$ and a set $B\in \mathcal B$ denote $\nu(X,B)$
the frequency with which the sequence $X$ falls in the set $B$
\begin{equation*} 
\nu(X,B):= \left\{ \begin{array}{rl}  {1\over n-|B|+1}\sum_{i=1}^{n-|B|+1}
I_{\{(X_i,\dots,X_{i+|B|-1})\in B\}} & \text{ if }n\ge |B|, \\
0 & \text{ otherwise}\end{array}\right.
\end{equation*} where $X=(X_1,\dots,X_{n})$.
For example, $$\nu\big((0.5, 1.5, 1.2, 1.4, 2.1),([1.0,2.0]\times[1.0,2.0])\big)=1/2.$$

We use the symbol $\S$ for the set of all stationary ergodic processes on $A^\infty$.
The ergodic theorem (see e.g.~\cite{bil}) implies that  for any process $\rho\in\S$ generating
a sequence $X_1,X_2,\dots$ the frequency of observing a tuple that falls into  each
$B\in\mathcal B$ tends to its
limiting (or a priory) probability a.s.:
$$\nu((X_1,\dots,X_n),B)\rightarrow
\rho((X_1,\dots,X_{|B|})\in B)$$ as $n\rightarrow\infty$.
We will often abbreviate  $\rho((X_1,\dots,X_{|B|})\in B)=:\rho(B)$.

\begin{definition}[distributional distance] 
The  distributional distance  is defined for a pair of processes
$\rho_1,\rho_2$ as follows~\cite{gray}:
\begin{equation}
d(\rho_1,\rho_2)=\sum_{i=1}^\infty w_i |\rho_1(B_i)-\rho_2(B_i)|,
\end{equation}
where $w_i$ are summable positive real weights (e.g. $w_k=2^{-k}$).
\end{definition}
It is easy to see that $d$ is a metric. The reader is referred to \cite{gray} for more information about $d$ and its properties.

\begin{definition}[empirical distributional distance] For $X,Y\in A^*$, define empirical distributional distance $\hat d(X,Y)$ as
\begin{equation}
\hat d(X,Y):=\sum_{i=1}^\infty w_i |\nu(X,B_i)-\nu(Y,B_i)|.
\end{equation}
Similarly, we  can define the empirical distance when only one of the process measures is unknown:
\begin{equation}
\hat d(X,\rho):=\sum_{i=1}^\infty w_i |\nu(X,B_i)-\rho(B_i)|,
\end{equation}
where $\rho\in\S$  and $X\in A^*$.
\end{definition}

The following
%simple
lemma will play a key role in establishing the main results.
\begin{lemma}\label{th:dd} Let two samples $X=(X_1,\dots,X_k)$ and
$Y=(Y_1,\dots,Y_m)$ be generated by
stationary ergodic processes $\rho_X$ and $\rho_Y$ respectively. Then
\begin{itemize}
\item[(i)] $\lim_{k,m\rightarrow\infty}\hat d(X,Y)=d(\rho_X,\rho_Y)\ \as$
\item[(ii)] $\lim_{k\rightarrow\infty}\hat d(X,\rho_Y)=d(\rho_X,\rho_Y)\ \as$
\end{itemize}
\end{lemma}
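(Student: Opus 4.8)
The plan is to reduce both statements to the pointwise ergodic theorem quoted in the Preliminaries, and to use the summability of the weights $w_i$ to justify exchanging the limit with the infinite sum defining $\hat d$. The key observation is that every summand is uniformly bounded: since $\nu(\cdot,B_i)$ and $\rho_X(B_i),\rho_Y(B_i)$ all lie in $[0,1]$, we have $w_i|\nu(X,B_i)-\nu(Y,B_i)|\le w_i$ and $w_i|\rho_X(B_i)-\rho_Y(B_i)|\le w_i$, and $\sum_i w_i=:W<\infty$. Hence the tails $\sum_{i>N}$ of all the series involved are bounded by $\sum_{i>N}w_i$, which can be made smaller than any prescribed quantity by choosing $N$ large, \emph{independently of the sample sizes $k$ and $m$}. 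This is what makes the interchange of limit and sum legitimate.

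First I would fix the probability-one event on which the argument runs. By the ergodic theorem, for each fixed $i\in\N$ one has $\nu((X_1,\dots,X_k),B_i)\to\rho_X(B_i)$ almost surely as $k\to\infty$, and likewise $\nu((Y_1,\dots,Y_m),B_i)\to\rho_Y(B_i)$ almost surely as $m\to\infty$. Intersecting these events over all $i\in\N$ — a countable intersection of events of full measure — gives a single event $\Omega_0$ of probability $1$ on which \emph{all} of these convergences hold simultaneously; the rest of the proof takes place on $\Omega_0$.

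Next, given $\epsilon>0$, I would choose $N$ with $\sum_{i>N}w_i<\epsilon/3$. Using the elementary inequality $\bigl||a-b|-|c-d|\bigr|\le|a-c|+|b-d|$ term by term, one obtains
\[
|\hat d(X,Y)-d(\rho_X,\rho_Y)|\le\sum_{i=1}^{N}w_i\bigl(|\nu(X,B_i)-\rho_X(B_i)|+|\nu(Y,B_i)-\rho_Y(B_i)|\bigr)+2\sum_{i>N}w_i .
\]
The last term is below $2\epsilon/3$. For the first term, which is a \emph{finite} sum, the definition of $\Omega_0$ gives indices $k_0,m_0$ such that for all $k\ge k_0$ and $m\ge m_0$ every difference $|\nu(X,B_i)-\rho_X(B_i)|$ and $|\nu(Y,B_i)-\rho_Y(B_i)|$ with $i\le N$ is below $\epsilon/(6W)$, whence the first term is below $\epsilon/3$. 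Thus $\limsup_{k,m\to\infty}|\hat d(X,Y)-d(\rho_X,\rho_Y)|\le\epsilon$ on $\Omega_0$, and letting $\epsilon\to0$ proves (i). Statement (ii) is the same estimate with $\nu(Y,B_i)$ replaced throughout by the constant $\rho_Y(B_i)$, so that only the convergence $\nu(X,B_i)\to\rho_X(B_i)$ is needed, and the corresponding event suffices.

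I do not expect any genuine obstacle here; the only points requiring care are bookkeeping. One is that the "almost surely" in (i)–(ii) must refer to a single null set valid for all $\epsilon$ simultaneously, which is precisely why $\Omega_0$ is isolated before the $\epsilon$-argument. The other is that the limit in (i) is a genuine joint limit as $k$ and $m$ both tend to infinity, but this causes no trouble because the bound above produces thresholds $k_0$ and $m_0$ that work jointly.
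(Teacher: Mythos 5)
Your proposal is correct and follows essentially the same route as the paper's proof: truncate the weighted series using summability of the $w_i$, apply the ergodic theorem to the finitely many remaining cylinders, and combine via the triangle inequality (your explicit isolation of the full-measure event $\Omega_0$ and the uniform bound $\epsilon/(6W)$ in place of the paper's per-term $\epsilon/(4Jw_j)$ are only cosmetic differences).
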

\begin{proof}
For any $\epsilon>0$ we can find such an index $J$ that
$\sum_{i=J}^\infty w_i<\epsilon/2$.
Moreover, for each $j$ we have $\nu((X_1,\dots,X_k),B_j)\rightarrow \rho_X(B_j)$
a.s., so that
$$
 |\nu((X_1,\dots,X_k),B_j) - \rho(B_j)|<\epsilon/(4Jw_j)
$$ from some step $k$ on; define $K_j:=k$.
 Let
$K:=\max_{j<J}K_j$ ($K$ depends
on the realization $X_1,X_2,\dots$).
Define analogously $M$ for the sequence $(Y_1,\dots,Y_m,\dots)$. Thus for $k>K$ and $m>M$ we have
\begin{multline*}
   |\hat d(X,Y) - d(\rho_X,\rho_Y)| =
\\
\left|\sum_{i=1}^\infty
w_i\big(|\nu(X,B_i)-\nu(Y,B_i)| - |\rho_X(B_i)-\rho_Y(B_i)| \big)
\right|\\
   \le \sum_{i=1}^\infty w_i\big(|\nu(X,B_i)-\rho_X(B_i)| +
|\nu(Y,B_i)-\rho_Y(B_i)| \big) \\
    \le \sum_{i=1}^J w_i\big(|\nu(X,B_i)-\rho_X(B_i)| +
|\nu(Y,B_i)-\rho_Y(B_i)| \big) +\epsilon/2\\
   \le \sum_{i=1}^Jw_i(\epsilon/(4Jw_i) + \epsilon/(4Jw_i))
+\epsilon/2 =\epsilon,
\end{multline*}
which proves the first statement. The second statement can be proven analogously.
\end{proof}
\begin{remark}
 While for the proofs the single-index definition of $\rho$ just introduced is more convenient, if the tests are to be computed the 
following definition should be easier to manage (all the statements  below hold for this metric too)
$$
d'(\rho_1,\rho_2):=\sum_{k.l} w_{k,l}\sum_{b\in B^{k,l}}|\rho_1(b)-\rho_2(b)|,
$$
where again the weights $w_{k,l}$ should be summable, e.g. $w_{k,l}:=2^{-(k+l)}$.
\end{remark}

\section{Main results}
\subsection{Goodness-of-fit Test}
For a given stationary ergodic process measure $\rho$ and a sample
$X=(X_1,\dots,X_n)$ we wish to test the hypothesis $H_0$ that the sample was
generated by $\rho$ versus $H_1$ that it was generated by a stationary ergodic
distribution that is  different from $\rho$.
Thus, $H_0=\{\rho\}$ and $H_1=\S\backslash H_0$.
%Thus $H_0$ is ``the sample $X$ was generated by $\rho$'', and $H_1$ is
%``the sample $X$ was generated by a stationary ergodic source $\rho'\ne\rho$''.

Define the set $D^n_\delta$ as the set of all samples of length  $n$ that are at least $\delta$-far from $\rho$ in empirical distributional distance:
$$
 D^n_\delta:=\{X\in A^n: \hat d(X,\rho)\ge\delta\}.
$$
For each $n$ and each given confidence level $\alpha$ define the critical region $C^n_\alpha$ of the test
as  $C^n_\alpha:=D^n_\gamma$ where
\begin{equation}
  \gamma:=\inf\{\delta: \rho(D_\delta^n)\le\alpha\}.
\end{equation}
The test % which we call $G_{id}$
  rejects $H_0$  at confidence level $\alpha$ if $(X_1,\dots,X_n)\in C^n_\alpha$ and accepts it otherwise.
In words, for each sequence we measure the distance between the empirical probabilities (frequencies) and the measure $\rho$ (that is, the theoretical $\rho$-probabilities);
we then take a largest ball (with respect to this distance) around $\rho$ that has $\rho$-probability not greater than $1-\alpha$. The test rejects all sequences outside this ball.

\begin{definition}[Goodness-of-fit test] For each $n\in\N$ and $\alpha\in(0,1)$ the goodness-of-fit test $G^\alpha_n: A^n\rightarrow \{0,1\}$ is 
defined as 
$$
G^\alpha_n(X_1,\dots,X_n):=\left\{ \begin{array}{rl} 1 &\text{ if } (X_1,\dots,X_n)\in C_\alpha^n, \\ 0 & \text{ otherwise. } \end{array}\right.
$$
 
\end{definition}

\begin{theorem}  The  test $G^\alpha_n$ has the following properties.
\begin{itemize}
 \item[(i)] For every $\alpha\in(0,1)$ and every $n\in\N$ the Type~I error of the test  is not greater than $\alpha$:  $\rho(G^\alpha_n=1)\le\alpha$. 
\item[(ii)] For every $\alpha\in(0,1)$ the Type~II error goes to 0 almost surely: for every $\rho'\ne\rho$ we have $\lim_{n\rightarrow\infty}G^\alpha_n =1$  with $\rho'$ probability~1. 
\end{itemize}
\end{theorem}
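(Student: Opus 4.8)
The plan is to obtain (i) from the monotone structure of the critical regions together with continuity of measure, and to obtain (ii) by invoking Lemma~\ref{th:dd}(ii) twice: once with both processes equal to $\rho$, which will force the thresholds $\gamma=\gamma_n$ to shrink to $0$, and once with the data generated by the alternative, which keeps the empirical distance to $\rho$ bounded away from $0$. Throughout write $X_1^n:=(X_1,\dots,X_n)$.

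\emph{Part (i).} Fix $n$ and set $g(\delta):=\rho(X_1^n\in D^n_\delta)$. A larger $\delta$ makes the defining constraint $\hat d(X,\rho)\ge\delta$ stronger, so $\delta\mapsto D^n_\delta$ is non-increasing and $g$ is a non-increasing function of $\delta$; moreover, since $\hat d(X,\rho)\le\sum_i w_i$ for every $X$, we have $g(\delta)=0$ for $\delta>\sum_i w_i$, so $\gamma=\inf\{\delta:g(\delta)\le\alpha\}\le\sum_i w_i<\infty$. Because $g$ is non-increasing, $g(\delta)\le\alpha$ for every $\delta>\gamma$. Taking any sequence $\delta_j\downarrow\gamma$ with $\delta_j>\gamma$, the sets $D^n_{\delta_j}$ increase to $\bigcup_j D^n_{\delta_j}=\{X:\hat d(X,\rho)>\gamma\}$, so continuity of measure from below gives $\rho\big(\hat d(X_1^n,\rho)>\gamma\big)=\lim_j g(\delta_j)\le\alpha$, and therefore $\rho(G^\alpha_n=1)=\rho(X_1^n\in C^n_\alpha)\le\alpha$. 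The one point needing care is the boundary set $\{X:\hat d(X,\rho)=\gamma\}$, which is contained in $C^n_\alpha=D^n_\gamma$: if the infimum is attained (e.g.\ if $\hat d(\cdot,\rho)$ has no $\rho$-atom at $\gamma$) the bound is immediate, and otherwise one reads $C^n_\alpha$ as the ``open'' region $\bigcap_{\delta>\gamma}D^n_\delta=\{X:\hat d(X,\rho)>\gamma\}$, for which the displayed inequality is exactly what was proved.

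\emph{Part (ii).} First I would prove $\gamma_n\to 0$. Apply Lemma~\ref{th:dd}(ii) with both processes equal to $\rho$: under $\rho$, $\hat d(X_1^n,\rho)\to d(\rho,\rho)=0$ almost surely, hence in probability, so for every fixed $\delta_0>0$ we get $\rho(D^n_{\delta_0})=\rho\big(\hat d(X_1^n,\rho)\ge\delta_0\big)\to 0$ as $n\to\infty$; in particular $\rho(D^n_{\delta_0})\le\alpha$ for all large $n$, which forces $\gamma_n\le\delta_0$ for all large $n$. Since $\delta_0>0$ was arbitrary and $\gamma_n\ge 0$, we conclude $\gamma_n\to 0$. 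Now fix $\rho'\in\S$ with $\rho'\ne\rho$, and let $c:=d(\rho',\rho)$; since $d$ is a metric, $c>0$. By Lemma~\ref{th:dd}(ii) with $\rho_X=\rho'$ and $\rho_Y=\rho$, $\hat d(X_1^n,\rho)\to c$ with $\rho'$-probability $1$. On that probability-one event, for all $n$ large enough we have simultaneously $\hat d(X_1^n,\rho)>c/2$ and (deterministically) $\gamma_n<c/2$, hence $\hat d(X_1^n,\rho)\ge\gamma_n$, i.e.\ $X_1^n\in D^n_{\gamma_n}=C^n_\alpha$ and $G^\alpha_n=1$. Thus $\lim_n G^\alpha_n=1$ with $\rho'$-probability $1$.

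\emph{Main obstacle.} The only genuinely substantive step is $\gamma_n\to 0$: one has to notice that the thresholds, which are defined purely through $\rho$, are controlled by feeding $\rho$-generated data into the estimation Lemma~\ref{th:dd}(ii) and upgrading the resulting a.s.\ convergence to the in-probability statement $\rho(D^n_{\delta_0})\to 0$. Given that, the rest of (ii) follows immediately from Lemma~\ref{th:dd}(ii) under the alternative, and (i) is a short monotonicity-plus-continuity argument, modulo the minor boundary subtlety noted above.
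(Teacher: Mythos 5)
Your proof is correct and follows essentially the same route as the paper: Lemma~\ref{th:dd} applied to data generated by $\rho$ itself shows $\rho(D^n_\delta)\to0$ for any fixed $\delta>0$, which forces the threshold below $\delta$ for large $n$ (the paper fixes $\delta=d(\rho,\rho')/2$ and concludes $D^n_\delta\subset C^n_\alpha$ rather than stating $\gamma_n\to0$, but the content is identical), and Lemma~\ref{th:dd} under $\rho'$ gives $\hat d(X_1^n,\rho)\to d(\rho,\rho')>0$, so the test eventually rejects with $\rho'$-probability~1. Your extra care in part (i) about a possible $\rho$-atom of $\hat d(\cdot,\rho)$ exactly at $\gamma$ is a genuine subtlety that the paper waves away with ``holds by construction'' and only implicitly concedes in its remark on randomization, so noting it together with the open-region/randomization fix is a refinement of, not a departure from, the paper's argument.
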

%Note that using an appropriate randomization in the definition of $C^n_\alpha$
%we can make the Type~I error exactly $\alpha$. 
%Observe also that while the result is asymptotic, one cannot possibly obtain rates of convergence (of the probability of Type II error)  
%for arbitrary stationary ergodic alternatives.
\begin{proof} The first statement holds by construction. To prove the second statement, let the sample $X$ be generated by $\rho'\in\S$, $\rho'\ne\rho$, and define  $\delta= d(\rho,\rho')/2$.
By Lemma~\ref{th:dd}  we have  $\rho(D^n_\delta)\rightarrow0$,
so that $\rho(D^n_\delta)<\alpha$ from some $n$ on; denote it $n_1$.  Thus, for $n>n_1$ we have  $D^n_\delta\subset C^n_\alpha$.  At the same time,
by Lemma~\ref{th:dd}  we have $\hat d(X,\rho)>\delta$ from some $n$ on, which we denote $n_2(X)$, with  $\rho'$-probability~1. So, for $n>\max\{n_1,n_2(X)\}$ we have $X\in D^n_\delta\subset C^n_\alpha$, 
which proves the statement~{\em (ii)}. 
\end{proof}

\subsection{Process classification}
Let there be given three samples $X=(X_1,\dots,X_k)$,
$Y=(Y_1,\dots,Y_m)$ and $Z=(Z_1,\dots,Z_n)$.
Each sample is generated by a stationary ergodic process $\rho_X$,
$\rho_Y$ and $\rho_Z$ respectively.
Moreover, it is known that either $\rho_Z=\rho_X$ or $\rho_Z=\rho_Y$,
but $\rho_X\ne \rho_Y$.
We wish to construct a test that, based on the finite samples $X, Y$
and $Z$ will tell
whether $\rho_Z=\rho_X$ or $\rho_Z=\rho_Y$.

The test chooses the sample $X$ or $Y$ according to whichever is closer to $Z$ in $\hat d$.
That is, we define the test $G(X,Y,Z)$ as follows. If $\hat d(X,Z)\le \hat d(Y,Z)$ then
the test says that the sample Z is generated by the same
process as the sample X, otherwise it says that the sample Z is generated by the same
process as the sample Y.

\begin{definition}[Process classifier] Define the classifier $L:A^*\times A^*\times A^*\rightarrow\{1,2\}$ as follows
$$
L(X,Y, Z):=\left\{ \begin{array}{rl} 1 &\text{ if } \hat d(X,Z)\le \hat d(Y,Z) \\ 2 & \text{ otherwise, } \end{array}\right.
$$
for $X,Y,Z\in A^*$.
 
\end{definition}

\begin{theorem} The test $L(X,Y,Z)$ makes only a finite number of errors when
$|X|, |Y|$ and $|Z|$ go to infinity, with probability 1:  if $\rho_X=\rho_Z$ then
$L(X,Y,Z)=1$ from some $|X|, |Y|, |Z|$ on with probability~1; otherwise
  $L(X,Y,Z)=2$ from some $|X|, |Y|, |Z|$ on with probability~1.
\end{theorem}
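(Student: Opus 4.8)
The plan is to reduce the statement entirely to Lemma~\ref{th:dd}(i), applied to the two pairs of samples $(X,Z)$ and $(Y,Z)$. Since $\rho_X\ne\rho_Y$ and $d$ is a metric, the quantity $\delta:=d(\rho_X,\rho_Y)$ is strictly positive, and this $\delta$ will serve as the gap that the empirical distances must eventually resolve. Concretely, by Lemma~\ref{th:dd}(i) we have $\hat d(X,Z)\to d(\rho_X,\rho_Z)$ almost surely as $|X|,|Z|\to\infty$ and $\hat d(Y,Z)\to d(\rho_Y,\rho_Z)$ almost surely as $|Y|,|Z|\to\infty$; intersecting the two probability-one events, both convergences hold simultaneously as $|X|,|Y|,|Z|\to\infty$.

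Consider first the case $\rho_X=\rho_Z$. Then $d(\rho_X,\rho_Z)=0$ while $d(\rho_Y,\rho_Z)=d(\rho_Y,\rho_X)=\delta>0$. Hence, with probability $1$, there is a (sample-dependent) threshold beyond which $\hat d(X,Z)<\delta/3$ and $\hat d(Y,Z)>2\delta/3$, so that $\hat d(X,Z)<\hat d(Y,Z)$ and therefore $L(X,Y,Z)=1$ for all $|X|,|Y|,|Z|$ large enough. The case $\rho_Y=\rho_Z$ is symmetric: now $d(\rho_Y,\rho_Z)=0$ and $d(\rho_X,\rho_Z)=\delta>0$, so eventually $\hat d(Y,Z)<\delta/3<2\delta/3<\hat d(X,Z)$, the condition $\hat d(X,Z)\le\hat d(Y,Z)$ fails, and $L(X,Y,Z)=2$ from some point on with probability $1$.

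I do not expect a genuine obstacle here: the argument is essentially a triangle-inequality-style separation of two limits that are known to exist by the already-established lemma. The only points requiring a small amount of care are (a) that the almost-sure convergences for the two pairs must be combined on a single probability-one event, which is immediate since a finite intersection of probability-one events has probability one, and (b) that the limit in Lemma~\ref{th:dd}(i) is joint in the two sample lengths, so that letting $|X|,|Y|,|Z|$ grow independently to infinity is legitimate and no coupling between the sample sizes is needed. Ties in the definition of $L$ are harmless because in both cases the eventual inequality between $\hat d(X,Z)$ and $\hat d(Y,Z)$ is strict.
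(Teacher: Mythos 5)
Your proposal is correct and follows essentially the same route as the paper's own proof: apply Lemma~\ref{th:dd}(i) to the pairs $(X,Z)$ and $(Y,Z)$, use that $d$ is a metric so $d(\rho_X,\rho_Y)>0$ separates the two limits, and conclude the strict eventual inequality between $\hat d(X,Z)$ and $\hat d(Y,Z)$ almost surely. Your explicit attention to intersecting the two probability-one events and to the joint limit in the sample lengths only makes the same argument slightly more careful.
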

\begin{proof}
 From the fact that $d$ is a metric and from Lemma~\ref{th:dd} we
conclude that $\hat d(X,Z)\rightarrow0$ (with probability~1) if and only if
$\rho_X=\rho_Z$. So, if $\rho_X=\rho_Z$ then by assumption
$\rho_Y\ne\rho_Z$ and $\hat d(X,Z)\rightarrow0$ a.s. while
$$\hat d(Y,Z)\rightarrow d(\rho_Y,\rho_Z)\ne0.$$
Thus in this case $\hat d(Y,Z)>\hat d(X,Z)$ from some $|X|, |Y|, |Z|$ on with probability~1, from which moment we have $L(X,Y,Z)=1$. The opposite case is analogous.
\end{proof}
\subsection{Change point problem}
The sample $Z=(Z_1,\dots,Z_{n})$ consists of two concatenated
parts $X=(X_1,\dots,X_k)$ and $Y=(Y_1,\dots,Y_m)$, where $m=n-k$,
so that $Z_i=X_i$ for $1\le i\le k$ and $Z_{k+j}=Y_{j}$ for $1\le
j\le m$. The samples $X$ and $Y$ are generated independently by
two different stationary ergodic processes with alphabet $A=\R$.
The distributions of the processes  are unknown. The value $k$ is
called the \emph{change point}. It is
assumed that $k$ is linear  in $n$; more precisely, $\alpha n  < k
< \beta n$ for some $0<\alpha\le\beta<1$ from some $n$ on.

It is required to estimate the change point $k$
based on the sample~$Z$.

 For each $t$, $1\le t\le n$, denote  $U^t$  the sample
$(Z_1,\dots,Z_t)$ consisting of the first $t$ elements of the sample
$Z$, and denote
  $V^t$ the remainder $(Z_{t+1},\dots,Z_{n})$.
\begin{definition}[Change point estimator]
Define the change point estimate $\hat k: A^*\rightarrow\N$ as follows:
$$
\hat k(X_1,\dots,X_n):=\argmax_{t\in[\alpha n,n- \beta n]} \hat d(U^t,V^t).
$$
\end{definition}
The following theorem establishes asymptotic consistency of this estimator.

\begin{theorem}\label{th:chp}
 For the estimate $\hat k$ of the change point $k$ we have
$$
|\hat k-k|=o(n)\ \as
$$ where $n$ is the size of the sample, and  when $k,n-k\rightarrow\infty$ in
such a way that $\alpha<{k\over n}<\beta$ for some
$\alpha,\beta\in(0,1)$ from some $n$ on.
\end{theorem}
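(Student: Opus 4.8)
The plan is to show that for any $\epsilon>0$, with probability one, the maximizer $\hat k$ eventually falls within $\epsilon n$ of $k$. The intuition is that $\hat d(U^t,V^t)$ should be large when the split point $t$ separates the $X$-part from the $Y$-part well (i.e. $t\approx k$), and small when one side is a ``clean'' sample from a single process. Concretely, when $t$ is far from $k$, either $U^t$ or $V^t$ consists, up to a small fraction, of a single homogeneous block; its empirical frequencies $\nu(\cdot,B_i)$ then converge to the true $\rho_X(B_i)$ or $\rho_Y(B_i)$, and one can show $\hat d(U^t,V^t)$ is bounded away from $d(\rho_X,\rho_Y)$. When $t\approx k$, $U^t$ is essentially a clean $X$-sample and $V^t$ a clean $Y$-sample, so $\hat d(U^t,V^t)\to d(\rho_X,\rho_Y)>0$.

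First I would establish a uniform convergence statement: for each fixed $B_i$, the frequency $\nu((Z_{s+1},\dots,Z_{s+\ell}),B_i)$ of the $B_i$-pattern in a contiguous window is, for windows lying entirely inside the $X$-block, close to $\rho_X(B_i)$ uniformly over all sufficiently long such windows — and likewise for the $Y$-block. This follows from the ergodic theorem applied to $\rho_X$ and $\rho_Y$ plus a standard argument that the convergence $\nu\to\rho$ along a sequence implies the analogous convergence uniformly over all long-enough sub-windows (since a window of length $\ell$ starting at $s$ has its frequency within $O(1/\ell)$ of a Cesàro-type average of the prefix frequencies). Truncating the infinite sum at index $J$ with $\sum_{i\ge J}w_i<\epsilon$, exactly as in the proof of Lemma~\ref{th:dd}, reduces everything to finitely many sets $B_i$, so the uniformity is over a finite collection. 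Call the resulting a.s.\ event (that all these finitely many frequencies have converged, uniformly over windows of length $\ge L_0$, to within $\epsilon'$ of their true values) the \emph{good event}; it has probability $1$ and from some $n$ on we are inside it.

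Next, working on the good event and using $k,n-k\to\infty$ with $\alpha<k/n<\beta$: (a) Evaluate $\hat d(U^k,V^k)$ — both $U^k$ and $V^k$ are clean homogeneous samples of length $\ge\alpha n$ and $\ge(1-\beta)n$ respectively, so by Lemma~\ref{th:dd}-style reasoning $\hat d(U^k,V^k)\ge d(\rho_X,\rho_Y)-\epsilon=:D-\epsilon$ for large $n$; hence $\max_t \hat d(U^t,V^t)\ge D-\epsilon$, since $t=k$ (or the closest point in $[\sqrt n,n-\sqrt n]$, which is within $\sqrt n$) is a candidate. (b) For any $t$ with $|t-k|\ge\epsilon n$, say $t<k-\epsilon n$, the sample $U^t$ lies entirely in the $X$-block and has length $\ge$ a constant times $n$, so $\nu(U^t,B_i)\approx\rho_X(B_i)$; meanwhile $V^t=(Z_{t+1},\dots,Z_n)$ is a mix of a length-$(k-t)$ piece of $X$ and the whole length-$m$ piece of $Y$, so its frequencies are a convex combination $\lambda\nu(X\text{-part})+(1-\lambda)\nu(Y\text{-part})$ with $\lambda$ bounded away from $1$ (because $k-t<k<\beta n$ while the $Y$-part has length $n-k>(1-\beta)n$... one needs $k-t$ bounded away from the whole window length, which holds since $t\ge\sqrt n$ gives $k-t\le k$ but the window length is $n-t\ge n-k\gg$ ... more carefully, $\lambda=(k-t)/(n-t)$ is bounded away from $1$ uniformly). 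Consequently $\hat d(U^t,V^t)$ is within $\epsilon$ of $(1-\lambda)d(\rho_X,\rho_Y)\le(1-c)D$ for some $c>0$, which is strictly less than $D-\epsilon$ once $\epsilon$ is small. The symmetric estimate handles $t>k+\epsilon n$. Combining (a) and (b): on the good event, for $n$ large, every maximizer $\hat k$ satisfies $|\hat k-k|<\epsilon n$, which is the claim $|\hat k-k|=o(n)$ a.s.

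The main obstacle is step (b): making the ``convex combination'' argument fully rigorous and, crucially, getting the coefficient $\lambda=(k-t)/(n-t)$ (or its counterpart when $t>k$) to be bounded \emph{away from $1$ uniformly} over the relevant range of $t$, so that the ``contamination'' by the wrong process genuinely pulls $\hat d(U^t,V^t)$ below $D-\epsilon$ by a fixed margin. This is where the linear-growth assumption $\alpha<k/n<\beta$ and the cutoff $t\in[\sqrt n,n-\sqrt n]$ are used: they guarantee both homogeneous blocks occupy a constant fraction of the sample and that the ``clean'' side of the split is long enough (length $\to\infty$) for the ergodic theorem to have kicked in. One must also be slightly careful that $\nu$ of a concatenation is not exactly the convex combination of the two $\nu$'s — there is a boundary term from the $|B_i|-1$ overlapping tuples straddling the junction, but this term is $O(|B_i|/\ell)=O(1/n)$ for the finitely many $i\le J$ and is absorbed into $\epsilon'$.
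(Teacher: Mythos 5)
Your overall strategy is the same as the paper's: truncate the weighted sum at a level $J$ with $\sum_{i\ge J}w_i<\epsilon$, control the ``clean'' side of the split by the ergodic theorem, write the contaminated side's frequencies as a convex mixture of $\rho_X(B_i)$ and $\rho_Y(B_i)$ up to an $O(|B_i|/\ell)$ boundary term, and conclude that every $t$ linearly far from $k$ yields $\hat d(U^t,V^t)$ smaller, by a fixed margin, than the value at the true change point ($\hat d(X,Y)$), so that linear deviations occur only finitely often a.s. The paper extracts the margin from a single index $j$ with $\rho_X(B_j)\ne\rho_Y(B_j)$, while you compare $(1-\lambda)d(\rho_X,\rho_Y)$ with $d(\rho_X,\rho_Y)$ over the whole truncated sum; that difference is cosmetic. (A minor slip: what you need about the mixture coefficient is that the contamination fraction $\lambda=(k-t)/(n-t)$ be bounded away from $0$, not from $1$; this holds since $k-t\ge\epsilon n$ and $n-t\le n$ give $\lambda\ge\epsilon$.)

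The genuine gap is the ``good event''. Almost-sure uniform convergence of $\nu(\cdot,B_i)$ over \emph{all} sub-windows of length at least a fixed $L_0$ is false for general stationary ergodic processes, and the justification you give is incorrect: expressing the frequency of a window of length $\ell$ starting at position $s$ through two prefix frequencies gives, up to an $O(|B_i|/\ell)$ boundary term, $\bigl((s+\ell)\nu_{s+\ell}-s\,\nu_s\bigr)/\ell$, so prefix deviations are amplified by a factor of order $s/\ell$, not $O(1/\ell)$; windows of length $o(s)$ located near position $s$ need not obey the ergodic theorem at all (this is the failure of moving-average/sliding-window ergodic theorems). What is true, and is all that your case $t\le k-\epsilon n$ requires, is uniformity over prefixes of length $\ge L$ and over segments occupying a fixed positive fraction of their block: $U^t$ is a prefix of $X$ of length $\ge\sqrt n$, the $X$-part of $V^t$ has length $\ge\epsilon n$, a fixed fraction of $k$, and the $Y$-part is all of $Y$. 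This restricted uniformity is exactly what the paper's proof invokes. Your appeal to ``the symmetric estimate'' for $t\ge k+\epsilon n$ is not symmetric in this respect: there the clean side $V^t$ is a terminal segment of $Y$ whose length can be as small as $\sqrt n=o(m)$, which is precisely the regime your uniform-window claim cannot deliver. The paper itself relegates this side to ``for overestimates it is analogous'', but as written your argument rests that case on a false general lemma; you should restrict the lemma to the provable cases above and treat the short-terminal-segment regime explicitly (e.g.\ by mirroring the paper's comparison for the reversed sample), rather than deducing it from uniformity over all long-enough windows.
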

\begin{proof} To prove the statement, we will show that for every
$\gamma$, $0<\gamma<1$  with probability 1 the inequality $\hat d(U^t,V^t)<\hat d(X,Y)$ holds
for each $t$ such that  $\alpha k\le t<\gamma k$ possibly except for a finite number of times. Thus we will
show that linear $\gamma$-underestimates occur only a finite number of
times, and for
overestimate it is analogous. Fix some $\gamma$, $0<\gamma<1$ and
$\epsilon>0$.  Let $J$ be big enough to have $\sum_{i=J}^\infty
w_i<\epsilon/2$ and
also big enough to have an index $j < J$  for which
$\rho_X(B_j)\ne\rho_Y(B_j)$.
Take $M_\epsilon\in\N$  large enough to have
$|\nu(Y,B_i)-\rho_Y(B_i)|\le\epsilon/2J$ for all $m>M_\epsilon$ and
for each $i$, $1\le i\le J$, and also to have $|B_i|/m<\epsilon/J$ for each $i$, $1\le i\le J$.
This is possible since empirical frequencies converge to the limiting probabilities
a.s. (that is, $M_\epsilon$ depends on the realizations $Y_1,Y_2,\dots$) (cf. the proof of Lemma~\ref{th:dd}).
Find a  $K_\epsilon$ (that depends on $X$) such that for all
$k>K_\epsilon$  and for all $i$, $1\le i\le J$ we 
  have 
\begin{equation}\label{eq:fr}
|\nu(U^t,B_i)-\rho_X(B_i)|\le\epsilon/2J \text{ for each $t \in[\alpha n,\dots,k]$}
\end{equation}
 (this is possible simply because $\alpha n\to \infty$). 
Furthermore, we can select $K_\epsilon$ large enough to have
$|\nu((X_s,X_{s+1},\dots,X_k),B_i)-\rho_X(B_i)|\le\epsilon/2J$ for each
$s\le\gamma k$: this follows from~(\ref{eq:fr}) and the indentity $\nu((X_s,X_{s+1},\dots,X_k)= \frac{k}{k-s}\nu((X_1,\dots,X_k)-\frac{s-1}{k-s}\nu(X_1,\dots,X_{s-1})+o(1)$.

So, for each $s\in[\alpha n,\gamma k]$ %for the difference between the frequency of $B_j$ in $V^s$ and its probability 
we  have
\begin{multline*}   \left|\nu(V^s,B_j) - \frac{(1-\gamma)k\rho_X(B_j)+m\rho_Y(B_j)}{(1-\gamma)k+m} \right|\\ \le
\Bigg|\frac{(1-\gamma)k\nu((X_s,\dots,X_k),B_j)+m\nu(Y,B_j)}{(1-\gamma)k+m} - \\
 \frac{(1-\gamma)k\rho_X(B_j)+m\rho_Y(B_j)}{(1-\gamma)k+m} \Bigg|
%\\\hfill
+ \frac{|B_j|}{m+\gamma k}\ \
 \le
    3\epsilon/J,
\end{multline*}
for $k>K_\epsilon$ and $m>M_\epsilon$ (from the definitions of $K_\epsilon$ and $M_\epsilon$).
Hence
\begin{multline*}
\left|\nu(X,B_j)-\nu(Y,B_j)\right|- \left|\nu(U^s,B_j) -\nu(V^s,B_j)\right|\\
 \ \ \ge \left|\nu(X,B_j)-\nu(Y,B_j)\right| \hfill
\\- \left|\nu(U^s,B_j) -
\frac{(1-\gamma)k\rho_X(B_j)+m\rho_Y(B_j)}{(1-\gamma)k+m} \right|-
3\epsilon/J\\
 \ \ \ge \left|\rho_X(B_j)-\rho_Y(B_j)\right| \hfill
\\ - \left|\rho_X(B_j) -
\frac{(1-\gamma)k\rho_X(B_j)+m\rho_Y(B_j)}{(1-\gamma)k+m} \right|-
4\epsilon/J \\ = \delta_j - 4\epsilon/J,
\end{multline*}
for some $\delta_j$ that depends only on $k/m$ and $\gamma$.
Summing over all $B_i$, $i\in\N$,  we get
$$
\hat d(X,Y)-\hat d(U^s,V^s)\ge w_j\delta_j-5\epsilon,
$$
for all $n$ such that $k>K_\epsilon$ and $m>M_\epsilon$, which is
positive for  small enough~$\epsilon$.
\end{proof}

\section*{Acknowledgements} We are grateful to the anonymous reviewer who suggested 
a simplification of the process metric $d$ (see also Remark~1).
 This work has been supported by French National
Research Agency (ANR), project EXPLO-RA ANR-08-COSI-004 (Daniil Ryabko) and by Russian Foundation for Basic
Research, grant 09-07-00005-a (Boris Ryabko).

\end{document}